\newtheorem{theorem}{Theorem}
\newtheorem {example}[theorem]{Example}
\newtheorem{property}{Property}
\newcommand{\sgraph}{G}
\newcommand{\weight}{w}
\newcommand{\neighbour}{N}
\newcommand{\products}{\mathcal{P}}
\newcommand{\snet}{\mathcal{S}}
\newcommand{\prodset}{P}
\newcommand{\srcnodes}{\mathit{source}}
\newcommand{\payoff}{p}
\newcommand{\strprofile}{s}
\newcommand{\inflset}{\mathcal{N}}
\newcommand{\constutil}{c_0}
\newcommand{\bigo}{O}
\newcommand{\bfe}[1]{\begin{bfseries}\emph{#1}\end{bfseries}\index{#1}}
\newcommand{\myra}{\mbox{$\:\rightarrow\:$}}
\newcommand{\fa}{\mbox{$\forall$}}
\newcommand{\LL}{\mbox{$\ldots$}}
\newcommand{\sse}{\mbox{$\:\subseteq\:$}}
\newcommand{\ES}{\emptyset}
\newcommand{\NI}{\noindent}
\newcommand{\II}{\vspace{2 mm}}
\newcommand{\HB}{\hfill{$\Box$}}
\newcommand{\oldbfe}[1]{\begin{bfseries}\emph{#1}\end{bfseries}}
\title{Social Network Games with Obligatory Product Selection}
\author{Krzysztof R. Apt
    \institute{%
      Centre for Mathematics and Computer Science (CWI), \\
      ILLC, University of Amsterdam, The Netherlands
    }
    \email{k.r.apt@cwi.nl}
\and
Sunil Simon
    \institute{%
      Centre for Mathematics and Computer Science (CWI)
  }
    \email{s.e.simon@cwi.nl}
}
\begin{document}
\maketitle

\begin{abstract}
  Recently, we introduced in \cite{AM11} a model for product adoption
  in social networks with multiple products, where the agents,
  influenced by their neighbours, can adopt one out of several
  alternatives (products). To analyze these networks we introduce
  social network games in which product adoption is obligatory.
  
  We show that when the underlying graph is a simple cycle, there is a
  polynomial time algorithm allowing us to determine whether the game
  has a Nash equilibrium. In contrast, in the arbitrary case this
  problem is NP-complete. We also show that the problem of determining
  whether the game is weakly acyclic is co-NP hard.
  
  Using these games we analyze various types of paradoxes that can
  arise in the considered networks. One of them corresponds to the
  well-known Braess paradox in congestion games. In particular, we
  show that social networks exist with the property that by adding an
  additional product to a specific node, the choices of the nodes will
  unavoidably evolve in such a way that everybody is strictly worse
  off.
\end{abstract}

\section{Introduction}

Social networks became a huge interdisciplinary research area with
important links to sociology, economics, epidemiology, computer
science, and mathematics.  A flurry of numerous articles, notably the
influential \cite{Mor00}, and books, e.g., \cite{Jac08,EK10}, helped
to delineate better this area.  It deals with many diverse topics such
as epidemics, spread of certain patterns of social behaviour, effects
of advertising, and emergence of `bubbles' in financial markets.

Recently, we introduced in \cite{AM11} \emph{social networks with
  multiple products}, in which the agents (players), influenced by
their neighbours, can adopt one out of several alternatives
(products).  To study the situation when the product adoption is
obligatory we introduce here social network games in which product
adoption is obligatory.  An example of a studied situation is when a
group of people chooses an obligatory `product', for instance, an
operating system or a mobile phone provider, by taking into account
the choice of their friends.  The resulting games exhibit the
following \bfe{join the crowd} property:

\begin{quote}
  the payoff of each player weakly increases when more players choose his strategy.
\end{quote}
that we define more precisely in Subsection~\ref{subsec:sng}.

The considered games are a modification of the strategic games that we
recently introduced in \cite{SA12} and more fully in \cite{SA13}, in
which the product adoption was optional.  The insistence on product
selection leads to a different analysis and different results than the
ones reported there.  In particular, Nash equilibria need not exist
already in the case when the underlying graph is a simple cycle.  We
show that one can determine in polynomial time whether for such social
networks a Nash equilibrium exists.  We prove that for arbitrary
networks, determining whether a Nash equilibrium exists is
NP-complete.  We also show that for arbitrary networks and for
networks whose underlying graph has no source nodes, determining
whether the game is weakly acyclic is co-NP hard.

The considered social networks allow us to analyze various paradoxes
that were identified in the literature.  One example is the
\emph{paradox of choice} first formulated in \cite{Sch05}.  It has
been summarised in \cite[page 38]{Gig08} as follows:
\begin{quote}                                                          
  The more options one has, the more possibilities for experiencing
  conflict arise, and the more difficult it becomes to compare the
  options. There is a point where more options, products, and choices
  hurt both seller and consumer.
\end{quote}                                                  
The point is that consumers choices depend on their friends' and
acquaintances' preferences.

Another example is a `bubble' in a financial market, where a
decision of a trader to switch to some new financial product triggers
a sequence of transactions, as a result of which all traders involved
become worse off.

Such paradoxes are similar to the renowned Braess paradox which states
that in some road networks the travel time can actually increase when
new roads are added, see, e.g., \cite[pages 464-465]{NRTV07} and a
`dual' version of Braess paradox that concerns the removal of road
segments, studied in \cite{FKLS12,FKS12}. Both paradoxes were studied
by means of congestion games.  However, in contrast to congestion
games, Nash equilibria do not need to exist in the games we consider
here. Consequently, one needs to rely on different arguments.
Moreover, there are now two new types of paradoxes that correspond to
the situations when an addition, respectively, removal, of a product
can lead to a game with no Nash equilibrium.

For each of these four cases we present a social network that exhibits
the corresponding paradox. These paradoxes were identified first in
\cite{AMS13} in the case when the adoption of a product was not
obligatory. In contrast to the case here considered the existence of a
strongest paradox within the framework of \cite{AMS13} remains an open
problem.

\section{Preliminaries}
\label{sec:prelim}

\subsection{Strategic games}

A \bfe{strategic game} for $n > 1$ players, written as $(S_1, \ldots, S_n,
p_1, \ldots, p_n)$, consists of a non-empty set $S_i$ of
\bfe{strategies} and a \bfe{payoff function} $p_i : S_1 \times \cdots
\times S_n \myra \mathbb{R}$,
for each player $i$.

Fix a strategic game
$
G := (S_1, \ldots, S_n, p_1, \ldots, p_n).
$
We denote $S_1 \times \cdots \times S_n$ by $S$, 
call each element $s \in S$
a \bfe{joint strategy},
denote the $i$th element of $s$ by $s_i$, and abbreviate the sequence
$(s_{j})_{j \neq i}$ to $s_{-i}$. Occasionally we write $(s_i,
s_{-i})$ instead of $s$.  

We call a strategy $s_i$ of player $i$ a \bfe{best response} to a
joint strategy $s_{-i}$ of his opponents if $ \fa s'_i \in S_i
\  p_i(s_i, s_{-i}) \geq p_i(s'_i, s_{-i})$. We call a joint strategy
$s$ a \bfe{Nash equilibrium} if each $s_i$ is a best response to
$s_{-i}$.
Further, we call a strategy $s_i'$ of player $i$ a \bfe{better
  response} given a joint strategy $s$ if $p_i(s'_i, s_{-i}) >
p_i(s_i, s_{-i})$.

By a \bfe{profitable deviation} we mean a pair $(s,s')$ of joint
strategies such that $s' = (s'_i, s_{-i})$ for some $s'_i$ and
$p_i(s') > p_i(s)$. 
Following \cite{MS96}, an \bfe{improvement path} is a maximal sequence
of profitable deviations. Clearly, if an improvement path is finite,
then its last element is a Nash equilibrium.  A game is called
\bfe{weakly acyclic} (see \cite{You93,Mil96}) if for every joint
strategy there exists a finite improvement path that starts at it. In
other words, in weakly acyclic games a Nash equilibrium can be reached
from every initial joint strategy by a sequence of unilateral
deviations. Given two joint strategies $s$ and $s'$ we write
\begin{itemize}
\item $s > s'$ if for all $i$, $p_i(s) > p_i(s')$.
\end{itemize}
When $s > s'$ holds we say that $s'$ is \bfe{strictly worse} than $s$.

\subsection{Social networks}

We are interested in strategic games defined over a specific type of
social networks introduced in \cite{AM11} that we recall first.

Let $V=\{1,\ldots,n\}$ be a finite set of \bfe{agents} and $\sgraph=(V,E,\weight)$ 
a weighted directed graph with $\weight_{ij} \in [0,1]$ being the
weight of the edge $(i,j)$. 
Given a node $i$ of $G$, we denote by
$\neighbour(i)$ the set of nodes from which there is an incoming edge to $i$.
We call each $j \in \neighbour(i)$ a \oldbfe{neighbour} of $i$ in $G$.
We assume that for each node $i$ such that $\neighbour(i) \neq \ES$, $\sum_{j
\in \neighbour(i)} w_{ji} \leq 1$.
An agent $i \in V$ is said to be a
\bfe{source node} in $\sgraph$ if $\neighbour(i)=\emptyset$.
Given a (to be defined) network $\snet$ we denote by $\srcnodes(\snet)$ the set of
source nodes in the underlying graph $\sgraph$.

By a \bfe{social network} (from now on, just \bfe{network}) we mean a
tuple $\snet=(\sgraph,\products,\prodset,\theta)$, where 
\begin{itemize}
\item $G$ is a weighted directed graph, 

\item $\products$ is a finite set of alternatives or \bfe{products},

\item $\prodset$ is function that 
assigns to each agent $i$ a non-empty set of products $\prodset(i)$
from which it can make a choice, 

\item $\theta$ is a \bfe{threshold
  function} that for each $i \in V$ and $t \in \prodset(i)$ yields a
value $\theta(i,t) \in (0,1]$.
\end{itemize}


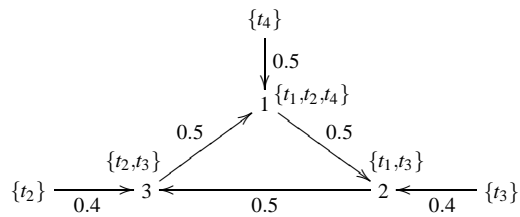
\begin{figure}[ht]
\centering
$
\def\objectstyle{\scriptstyle}
\def\labelstyle{\scriptstyle}
\xymatrix@R=20pt @C=30pt{
& &\{t_4\} \ar[d]^{0.5}\\
& &1 \ar[rd]^{0.5} \ar@{}[rd]^<{\{t_1,t_2,t_4\}}\\
\{t_2\} \ar[r]_{0.4} &3 \ar[ur]^{0.5} \ar@{}[ur]^<{\{t_2,t_3\}}& &2 \ar[ll]^{0.5} \ar@{}[lu]_<{\{t_1,t_3\}} &\{t_3\} \ar[l]^{0.4}\\
}$

\caption{\label{fig:socnet}A social network}
\end{figure}

\begin{example}
\label{ex:socnet}
\normalfont Figure \ref{fig:socnet} shows an example of a network. Let
the threshold be $0.3$ for all nodes. The set of products $\products$
is $\{t_1,t_2,t_3,t_4\}$, the product set of each agent is marked next to
the node denoting it and the weights are labels on the edges. Each
source node is represented by the unique product in its product set.
\HB
\end{example}

Given two social networks $\snet$ and $\snet'$ we say that $\snet'$ is
an \bfe{expansion} of $\snet$ if it results from adding a product to
the product set of a node in $\snet$.  We say then also that $\snet$
is a \bfe{contraction} of $\snet'$.

\subsection{Social network games}
\label{subsec:sng}

Next, introduce the strategic games over the social networks. They
form a modification of the games studied in \cite{SA12, SA13} in that we do
not admit a strategy representing the fact that a player abstains from
choosing a product.

Fix a network $\snet=(\sgraph,\products,\prodset,\theta)$. With each
network $\snet$ we associate a strategic game
$\mathcal{G}(\snet)$. The idea is that the agents simultaneously
choose a product.  Subsequently each node assesses his choice by
comparing it with the choices made by his neighbours.  Formally, we
define the game as follows:

\begin{itemize}
\item the players are the agents (i.e., the nodes),

\item the set of strategies for player $i$ is
$S_i :=\prodset(i)$,

\item For $i \in V$, $t \in
\prodset(i)$ and a joint strategy $\strprofile$, let
$
\inflset_i^t(\strprofile) :=\{j \in \neighbour(i) \mid s_j=t\},
$
i.e., $\inflset_i^t(\strprofile)$ is the set of neighbours of $i$ who adopted in $s$
the product $t$.

The payoff function is defined as follows, where $\constutil$ is some given in advance
positive constant:

\begin{itemize}
\item for $i \in \srcnodes(\snet)$,

$\payoff_i(\strprofile) := \constutil$,

\item for $i \not\in \srcnodes(\snet)$,

$\payoff_i(s) := \sum\limits_{j \in \inflset_i^t(\strprofile)} w_{ji}-\theta(i,t) \mbox{ , where } \strprofile_i=t \mbox{ and } t \in \prodset(i)$.

\end{itemize}

\end{itemize}

In the first case we assume that the payoff function for the source
nodes is constant only for simplicity.  The second case of the payoff
definition is motivated by the following considerations.  When agent
$i$ is not a source node, his `satisfaction' from a joint strategy
depends positively from the accumulated weight (read: `influence') of
his neighbours who made the same choice as him, and negatively from
his threshold level (read: `resistance') to adopt this product.  The
assumption that $\theta(i,t) > 0$ reflects the view that there is
always some resistance to adopt a product. 

We call these games
\bfe{social network games with obligatory product selection}, 
in short, \bfe{social network games}.

\begin{example}
\label{ex:payoff}
\normalfont Consider the network given in Example~\ref{ex:socnet} and
the joint strategy $\strprofile$ where each source node chooses the unique
product in its product set and nodes 1, 2 and 3 choose $t_2$, $t_3$
and $t_2$ respectively. The payoffs are then given as follows:

\begin{itemize}
\item for the source nodes, the payoff is the fixed constant $\constutil$,
\item $\payoff_1(\strprofile)=0.5-0.3=0.2$,
\item $\payoff_2(\strprofile)=0.4-0.3=0.1$,
\item $\payoff_3(\strprofile)=0.4-0.3=0.1$.
\end{itemize}

Let $\strprofile'$ be the joint strategy in which player 3 chooses
$t_3$ and the remaining players make the same choice as given in
$\strprofile$. Then $(\strprofile,\strprofile')$ is a profitable
deviation since $\payoff_3(\strprofile') > \payoff_3(\strprofile)$. In
what follows, we represent each profitable deviation by a node and a
strategy it switches to, e.g., $3:t_3$. Starting at $\strprofile$, the
sequence of profitable deviations $3:t_3, 1:t_4$ is an improvement
path which results in the joint strategy in which nodes 1, 2 and 3
choose $t_4$, $t_3$ and $t_3$ respectively and, as before, each source
node chooses the unique product in its product set.  \HB
\end{example}

By definition, the payoff of each player depends only on the
strategies chosen by his neighbours, so the social network games are
related to graphical games of \cite{KLS01}. However, the underlying
dependence structure of a social network game is a directed
graph. Further, note that these games satisfy the \bfe{join the crowd}
property that we define as follows:

\begin{quote}
Each payoff function $p_i$ depends only on the strategy chosen by player $i$ and
the set of players who also chose his strategy. Moreover, 
the dependence on this set is monotonic.
\end{quote}

The last qualification is exactly opposite to the definition of 
congestion games with player-specific payoff functions of~\cite{Mil96} 
in which the dependence on the above set is antimonotonic. 
That is, when more players choose the strategy of player $i$, then his payoff weakly decreases.

\section{Nash equilibria}

The first natural question we address is whether the social network
games have a Nash equilibrium.

\subsection{Simple cycles}

In contrast to the case of games studied in \cite{SA12} the answer is negative already
for the case when the underlying graph is a simple cycle.

\begin{example} \label{ex:nonash}
\label{ex:noNe1}
\rm
Consider the network given in Figure~\ref{fig:noNe-cycle}, where the product
set of each agent is marked next to the node denoting it and
the weights are all equal and put as labels on the edges.

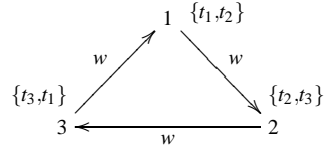
\begin{figure}[ht]
\centering
$
\def\objectstyle{\scriptstyle}
\def\labelstyle{\scriptstyle}
\xymatrix@R=30pt @C=30pt{
&1 \ar[rd]^{w} \ar@{}[rd]^<{\{t_1,t_2\}}\\
3 \ar[ur]^{w} \ar@{}[ur]^<{\{t_3,t_1\}}& &2 \ar[ll]^{w} \ar@{}[lu]_<{\{t_2,t_3\}} \\
}$
\caption{\label{fig:noNe-cycle}A simple cycle no Nash equilibrium}
\end{figure}

Let the thresholds be defined as follows:
$\theta(1,t_1)=\theta(2,t_2)=\theta(3,t_3)=r_1$ and
$\theta(1,t_2)=\theta(2,t_3)=\theta(3,t_1)=r_2$ where $r_1>r_2$. We
also assume that $w >r_1-r_2$. Hence for all $\strprofile_2$ and
$\strprofile_3$
\[\payoff_1(t_1,\strprofile_2,t_1) > \payoff_1(t_2,\strprofile_2,\strprofile_3) > \payoff_1(t_1,\strprofile_2,t_3)\]
and similarly for the payoff functions $\payoff_2$ and $\payoff_3$. So
it is more profitable for player $i$ to adopt strategy $t_i$ provided
its neighbour also adopts $t_i$.

It is easy to check that the game associated with this network has no
Nash equilibrium. Indeed, here is the list of all the joint strategies,
where we underline the strategy that is not a best response to the
choice of other players: $(t_1,\underline{t_2},t_1)$,
$(t_1,\underline{t_2},t_3)$, $(t_1,t_3,\underline{t_1})$,
$(\underline{t_1}, t_3, t_3)$, $(\underline{t_2},t_2,t_1)$,
$(t_2,t_2,\underline{t_3})$, $(\underline{t_2},t_3,t_1)$,
$(t_2,\underline{t_3},t_3)$.
\HB
\end{example}

This example can be easily generalized to the case of an arbitrary
simple cycle. Below, $i \oplus 1$ and $i \ominus 1$ stand for addition
and subtraction defined cyclically over the set $\{1,\ldots,n\}$. So
$n \oplus 1=1$ and $1 \ominus 1=n$.  Indeed, consider a social network
with $n$ nodes that form a simple cycle and assume that each player
$i$ has strategies $t_i$ and $t_{i \oplus 1}$. Choose for each player
$i$ the weights $w_{i \ominus 1 \: i}$ and the threshold function
$\theta(i,t)$ so that
\[
w_{i \ominus 1 \: i} - \theta(i,t_i) > - \theta(i,t_{i \oplus 1}) > - \theta(i,t_i),
\]
so that (we put on first two positions, respectively, the strategies
of players $i \ominus 1$ and $i$, while the last argument is a joint
strategy of the remaining $n-2$ players)
\[
p_i(t_i, t_i, s) > p_i(t', t_{i \oplus 1}, s') > p_i(t_{i \ominus 1}, t_i, s''),
\]
where $t', s, s'$ and $s''$ are arbitrary.  It is easy to check then
that the resulting social network game has no Nash equilibrium.

A natural question is what is the complexity of determining whether a
Nash equilibrium exists.  First we consider this question for the
special case when the underlying graph is a simple cycle.

\begin{theorem} \label{thm:cycle}
Consider a network $\snet$ whose underlying graph is a simple cycle.
It takes $\bigo(n \cdot |\products|^4)$ time to decide whether the game
$\mathcal{G}(\snet)$ has a Nash equilibrium.
\end{theorem}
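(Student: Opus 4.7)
Since the underlying graph of $\snet$ is a simple directed cycle on $n$ nodes, there are no source nodes and each player $i$ has a unique in\nobreakdash-neighbour, which I denote $i \ominus 1$. The payoff formula therefore simplifies: $\payoff_i(s) = w_{i \ominus 1\, i} - \theta(i, s_i)$ when $s_{i \ominus 1} = s_i$, and $\payoff_i(s) = -\theta(i, s_i)$ otherwise. In particular, whether $s_i$ is a best response of player $i$ depends only on the value of $s_{i \ominus 1}$, so the Nash equilibrium question reduces to a purely local cyclic consistency problem.

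The plan is to solve that consistency problem by dynamic programming around the cycle. First I would precompute, for every node $i$ and every product $t \in \prodset(i \ominus 1)$, the best-response set $BR_i(t) \subseteq \prodset(i)$. Using the closed-form payoff above, $BR_i(t)$ is obtained by a single scan over $\prodset(i)$, so this preprocessing stage costs $O(n \cdot |\products|^2)$ in total. Once all the $BR_i$ are in hand, $s$ is a Nash equilibrium iff $s_i \in BR_i(s_{i \ominus 1})$ for every $i$, i.e., iff there is a cyclic sequence $(s_1, \ldots, s_n) \in \prodset(1) \times \cdots \times \prodset(n)$ satisfying all these local constraints simultaneously.

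The existence of such a sequence I would test by anchoring $s_1$: for every candidate initial choice $t_0 \in \prodset(1)$, iteratively compute $A_1 := \{t_0\}$ and $A_{i+1} := \bigcup_{t \in A_i} BR_{i+1}(t)$, and finally check whether some $t \in A_n$ satisfies $t_0 \in BR_1(t)$. Each propagation step takes $O(|\products|^2)$ time because both $|A_i|$ and each $|BR_{i+1}(t)|$ are at most $|\products|$; with $n$ steps and at most $|\products|$ candidate values of $t_0$, the overall running time fits comfortably inside the claimed $O(n \cdot |\products|^4)$ bound. A witnessing Nash equilibrium, when one exists, can be reconstructed from back-pointers stored during the propagation.

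The step that deserves the most care is the cyclic closure condition. A naive forward sweep without pinning $s_1$ would test only the existence of a consistent sequence along a chain, and, as Example~\ref{ex:noNe1} illustrates, a path-consistent sequence may fail to close into a cycle: every strategy can admit a locally legal continuation while no assignment simultaneously satisfies all $n$ best-response constraints. Fixing $s_1 = t_0$, propagating, and explicitly verifying $t_0 \in BR_1(s_n)$ is what handles this subtlety correctly, and it is the main reason the algorithm has to repeat its sweep for every starting product.
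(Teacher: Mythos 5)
Your proposal is correct and follows essentially the same route as the paper: both reduce the question to cyclic consistency of the best-response relations around the cycle and test it by propagating/composing these relations, with the closure (identity) check handling exactly the subtlety you flag. The only difference is implementation detail --- the paper composes the full relations $R_1 \circ \cdots \circ R_n$ and intersects with the identity, while you anchor $s_1$ and propagate reachable sets per starting product, which in fact yields a slightly better $\bigo(n \cdot |\products|^3)$ bound, still within the claimed one.
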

\begin{proof}
Suppose $\snet=(\sgraph,\products,\prodset,\theta)$.  When the
underlying graph of $\snet$ is a simple cycle, the concept of a best
response of player $i \oplus 1$ to a strategy of player $i$ is
well-defined.  Let
\[
R_i := \{(t_i, t_{i \oplus 1}) \mid t_i \in \prodset(i), t_{i \oplus
  1} \in \prodset(i \oplus 1), \mbox{$t_{i \oplus 1}$ is a best
  response to $t_i$}\},
\]
\[
I := \{(t,t) \mid t \in \products \},
\]
and let $\circ$ stand for the composition of binary relations.

The question whether $\mathcal{G}(\snet)$ has a Nash equilibrium is
then equivalent to the problem whether there exists a sequence $a_1,
..., a_n$ such that $(a_1, a_2) \in R_1, ..., (a_{n-1}, a_n) \in R_{n-1}, (a_n, a_1) \in R_n$.  In
other words, is $(R_1 \circ \dots \circ R_n) \cap I$ non-empty?

To answer this question we first construct successively $n-1$ compositions
$R_1 \circ R_2$, $(R_1 \circ R_2) \circ R_3$, $\dots$, $(\dots (R_1 \circ R_2)
\dots \circ R_{n-1}) \circ R_n$.

Each composition construction can be carried out in $|\products|^4$
steps.  Indeed, given two relations $A, B \sse \products \times
\products$, to compute their composition $A \circ B$ requires for each
pair $(a,b) \in A$ to find all pairs $(c,d) \in B$ such that $b = c$.
Finally, to check whether the intersection of $R_1 \circ \dots \circ
R_n$ with $I$ is non-empty requires at most $|\products|$ steps.

So to answer the original question takes
$\bigo(n \cdot |\products|^4)$ time.
\end{proof}

Note that this proof applies to any strategic game in which there is a
reordering of players $\pi(1), \dots, \pi(n)$ such that the payoff of
player $\pi(i)$ depends only on his strategy and the strategy chosen
by player $\pi(i \ominus i)$.


It is worthwhile to note that for the case of simple cycles, the
existence of Nash equilibrium in the associated social network game
does not imply that the game is weakly acyclic.

\begin{figure}[ht]
\centering
\begin{tabular}{ccc}
$
\def\objectstyle{\scriptstyle}
\def\labelstyle{\scriptstyle}
\xymatrix@R=30pt @C=30pt{
&1 \ar[rd]^{w} \ar@{}[rd]^<{\{t_1,t_2,t_4\}}\\
3 \ar[ur]^{w} \ar@{}[ur]^<{\{t_3,t_1,t_4\}}& &2 \ar[ll]^{w} \ar@{}[lu]_<{\{t_2,t_3,t_4\}} \\
}$
& &
$                                                                               
\def\objectstyle{\scriptstyle}                                                  
\def\labelstyle{\scriptstyle}                                                   
\xymatrix@W=8pt @C=15pt @R=15pt{                                                
(t_1,t_3,\underline{t_1})\ar@{=>}[r]& (\underline{t_1},t_3,t_3)\ar@{=>}[r]& (t_2,\underline{t_3}, t_3)\ar@{=>}[d]\\
(t_1, \underline{t_2}, t_1)\ar@{=>}[u]& (\underline{t_2},t_2,t_1)\ar@{=>}[l]& (t_2,t_2,\underline{t_3})\ar@{=>}[l]\\
}$\\
(a) & & (b)
\end{tabular}
\caption{\label{fig:notWA}A simple cycle and an infinite improvement path}
\end{figure}
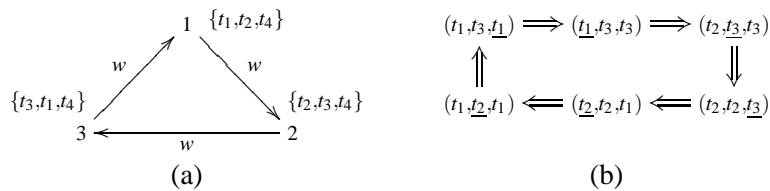

\begin{example}
\label{ex:Ne-noWA}
\normalfont Consider the network in Figure~\ref{fig:notWA}(a) which is
a modification of the network in Figure~\ref{fig:noNe-cycle}. We add a
new product $t_4$ to the product set of all the nodes $i$ with
$\theta(i,t_4) > r_1$. We also assume that $w -\theta(i,t_4) >
-r_2$. Then the joint strategy $(t_4,t_4,t_4)$ is a Nash
equilibrium. However, Figure~\ref{fig:notWA}(b) shows the unique
improvement path starting in $(t_1,t_3,t_1)$ which is infinite. For
each joint strategy in the figure, we underline the strategy that is
not a best response. This shows that the game is not weakly acyclic.
\HB
\end{example}

In Section \ref{sec:WA}  we shall study the complexity of checking
whether a social network game is weakly acyclic.


\subsection{Arbitrary social networks}
In this section we establish two results which show that deciding
whether a social network has a Nash equilibrium is computationally
hard.
\begin{theorem}
\label{thm:Ne-npcomplete}
Deciding whether for a social network $\snet$ the game
$\mathcal{G}(\snet)$ has a Nash equilibrium is NP-complete.
\end{theorem}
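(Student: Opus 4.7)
The approach is the standard two-step proof of NP-completeness, and the plan is as follows.

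\emph{Membership in NP.} A joint strategy is a polynomial-size certificate. To verify it is a Nash equilibrium one computes, for each player $i$ and each alternative $t' \in \prodset(i)$, the payoff $\payoff_i(t', s_{-i})$ and compares it with $\payoff_i(s_i, s_{-i})$; each payoff is a sum of at most $n$ edge-weights minus a threshold, so verification is polynomial.

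\emph{NP-hardness via reduction from 3-SAT.} Given a formula $\varphi$ on variables $x_1, \ldots, x_n$ with clauses $C_1, \ldots, C_m$, I would build a network $\snet_\varphi$ out of two kinds of gadgets. A \emph{variable gadget} for $x_i$ consists of a single non-source node $v_i$ with $\prodset(v_i) = \{T_i, F_i\}$ fed symmetrically by two source nodes (one playing only $T_i$, one playing only $F_i$), with weights and thresholds chosen so that both $T_i$ and $F_i$ are best responses at $v_i$; thus in any Nash equilibrium the choices at $v_1, \ldots, v_n$ encode a truth assignment. A \emph{clause gadget} for $C_j = \ell_{j,1} \vee \ell_{j,2} \vee \ell_{j,3}$ is a modification of the three-node no-Nash-equilibrium cycle of Example~\ref{ex:nonash}: its nodes $c_j^1, c_j^2, c_j^3$ form a directed triangle carrying the cyclic preference pattern of that example; in addition each $c_j^k$ has an extra product $e_j^k$---chosen equal to the product at the variable node $v_i$ that makes $\ell_{j,k}$ true---and an extra incoming edge from $v_i$. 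Weights and thresholds are calibrated so that playing $e_j^k$ strictly dominates every cycle payoff when the literal is true at $v_i$, and is strictly dominated by every cycle payoff when the literal is false. Consequently the clause gadget admits a local Nash equilibrium exactly when at least one literal of $C_j$ is true; otherwise it reproduces the no-equilibrium pattern of Example~\ref{ex:nonash} verbatim.

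\emph{Correctness and main obstacle.} For correctness, one direction takes a satisfying assignment of $\varphi$, sets each $v_i$ accordingly, and stabilises each clause gadget at the local equilibrium afforded by a satisfied literal; the other direction reads a truth assignment off the variable nodes of any Nash equilibrium and uses local stability to conclude that every clause is satisfied. The reduction is clearly polynomial in $|\varphi|$. The hard part, and the technical core of the proof, is the joint calibration of weights and thresholds in the clause gadget so that the three payoff regimes (strictly-dominant $e_j^k$, cycle payoffs, strictly-dominated $e_j^k$) are cleanly separated while each weight stays in $[0,1]$ and the incoming-weight sum at every node remains $\leq 1$. One must also ensure that the broadcast edges $v_i \to c_j^k$ do not perturb the variable gadget's indifference; this is automatic because $v_i$'s payoff depends only on its incoming edges. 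A minor side issue---a variable may appear in many clauses---is handled by adding enough auxiliary source nodes so that the incoming-weight bound at each clause node is respected.
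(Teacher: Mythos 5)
Your proposal is correct in outline, but it takes a genuinely different route from the paper. The paper reduces from PARTITION: it attaches the $n$ numbers as weights $w_{ia}=w_{ib}=a_i$ on edges from $n$ two-product nodes into two nodes $a$ and $b$ with threshold $\tfrac12$, and then glues onto $a$ and $b$ two copies of the no-Nash-equilibrium triangle of Example~\ref{ex:noNe2} (one keyed on $t_1$, one on $t_1'$); the triangles force $s_a=t_1'$ and $s_b=t_1$ in any equilibrium, and the best-response conditions at $a$ and $b$ then hold exactly when the chosen set splits the $a_i$ into two halves of weight $\tfrac12$ each. Your 3-SAT reduction uses the same essential ingredient -- a three-node cyclic gadget whose equilibrium existence is switched on or off by an external node's choice -- but deploys one such gadget per clause and encodes the assignment in indifferent variable nodes rather than in a numeric weight split. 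The calibration you flag as the technical core does go through: taking the cycle parameters $r_2<r_1$, $w>r_1-r_2$ of Example~\ref{ex:nonash}, it suffices to set $\theta(c_j^k,e_j^k)=r_1+\epsilon$ and the literal-edge weight $w'>w+\epsilon$ with $w+w'\le 1$, so that $e_j^k$ strictly beats the maximal cycle payoff $w-r_1$ when the literal is true and is strictly worse than the guaranteed payoff $-r_2$ of $t_{k\oplus 1}^j$ when it is false. (Your worry about a variable occurring in many clauses is actually vacuous: the constraint $\sum_{j\in N(i)}w_{ji}\le 1$ bounds incoming weight only, the literal edges are outgoing from $v_i$, and each clause node has just two incoming edges.) Comparing the two: your reduction is from a strongly NP-complete problem and so shows hardness even with unary-encoded weights, and you supply the NP-membership argument that the paper leaves implicit; the paper's PARTITION construction is more economical (a constant number of gadget nodes beyond the $n$ weight carriers) and, more importantly, is reused almost verbatim in the proofs of Theorems~\ref{thm:Ne-nosource-npcomplete} and~\ref{thm:acyclic}, which your clause-based construction would not support as directly.
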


To prove the result we first construct another example of a social
network game with no Nash equilibrium and then use it to determine the
complexity of the existence of Nash equilibria.

\begin{example} \label{ex:noNe2}
\rm Consider the network given in Figure~\ref{fig:noNe2}, where the
product set of each agent is marked next to the node denoting it and
the weights are labels on the edges. Nodes with a unique product in
the product set is simply represented by the product.
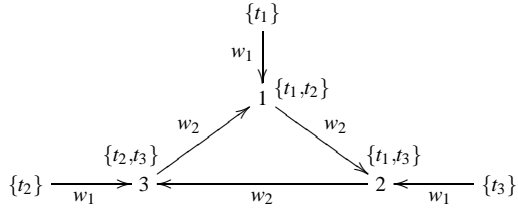
\begin{figure}[ht]
\centering
$
\def\objectstyle{\scriptstyle}
\def\labelstyle{\scriptstyle}
\xymatrix@R=20pt @C=30pt{
& & & \{t_1\} \ar[d]_{w_1}\\
& & &1 \ar[rd]^{w_2} \ar@{}[rd]^<{\{t_1,t_2\}}\\
&\{t_2\} \ar[r]_{w_1} &3 \ar[ur]^{w_2} \ar@{}[ur]^<{\{t_2,t_3\}}& &2 \ar[ll]^{w_2} \ar@{}[lu]_<{\{t_1,t_3\}} &\{t_3\} \ar[l]^{w_1}\\
}$

\caption{\label{fig:noNe2}A network with no Nash equilibrium}
\end{figure}

We assume that each threshold is a constant
$\theta$, where $ \theta < w_1 < w_2.  $ So it is more profitable to a
player residing on a triangle to adopt the product adopted by his
neighbour residing on a triangle than by the other neighbour.

The game associated with this network has no Nash equilibrium. It
suffices to analyze the joint strategies involving nodes 1, 2 and 3
since the other nodes have exactly one product in their product
sets. Here we provide a listing of all such joint strategies, where we
underline the strategy that is not a best response to the choice of
other players: $(\underline{t_1}, t_1, t_2)$, $(t_1, t_1,
\underline{t_3})$, $(t_1, t_3, \underline{t_2})$, $(t_1,
\underline{t_3}, t_3)$, $(t_2, \underline{t_1}, t_2)$, $(t_2,
\underline{t_1}, t_3)$, $(t_2, t_3, \underline{t_2})$,
$(\underline{t_2}, t_3, t_3)$.  In contrast, what will be of relevance
in a moment, if we replace $\{t_1\}$ by $\{t_1'\}$, then the
corresponding game has a Nash equilibrium, namely the joint strategy
corresponding to the triple $(t_2,t_3,t_3)$.  
\HB
\end{example}

\noindent{\it Proof of Theorem \ref{thm:Ne-npcomplete}:} As in \cite{AM11}, to
show NP-hardness, we use a reduction from the NP-complete PARTITION
problem, which is: given $n$ positive rational numbers
$(a_1,\ldots,a_n)$, is there a set $S$ such that $\sum_{i\in S} a_i =
\sum_{i\not\in S} a_i$?  Consider an instance $I$ of PARTITION.
Without loss of generality, suppose we have normalised the numbers so
that $\sum_{i=1}^n a_i = 1$. Then the problem instance sounds: is
there a set $S$ such that $\sum_{i\in S} a_i = \sum_{i\not\in S} a_i =
\frac{1}{2}$?
  
To construct the appropriate network we employ the networks given in
Figure~\ref{fig:noNe2} and in Figure~\ref{fig:partition}, where for
each node $i\in\{1,\ldots,n\}$ we set $w_{i a} = w_{i b} = a_i$, and
assume that the thresholds of the nodes $a$ and $b$ are constant and
equal $\frac12$. 

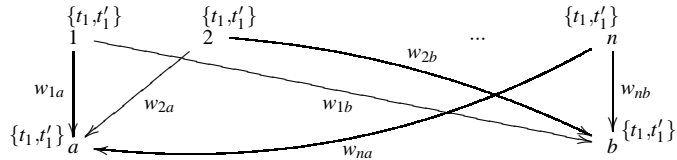
\begin{figure}[ht]
\centering
$
\def\objectstyle{\scriptstyle}
\def\labelstyle{\scriptstyle}
\xymatrix@W=10pt @R=30pt @C=35pt{
1 \ar@{}[r]^<{\{t_1,t_1'\}} \ar[d]_{w_{1a}} \ar[rrrrd]_{w_{1b}}& 2 \ar@{}[r]^<{\{t_1,t_1'\}} \ar[ld]^{w_{2a}} \ar@/^0.7pc/[rrrd]^{w_{2b}}&  &\cdots &n \ar@{}[l]_<{\{t_1,t_1'\}} \ar@/^1.5pc/[lllld]^{w_{na}} \ar[d]^{w_{nb}}\\
a \ar@{}[u]^<{\{t_1,t_1'\}}  & & & &b \ar@{}[u]_<{\{t_1,t_1'\}} \\
}$
\caption{\label{fig:partition}A network related to the PARTITION problem}
\end{figure}

To finalize the construction we use two copies of the network given in
Figure~\ref{fig:noNe2}, one unchanged and the other in which the
product $t_1$ is replaced everywhere by $t'_1$, and
construct the desired network
$\snet$ by identifying with the node marked by $\{t_1\}$ in the
network from Figure~\ref{fig:noNe2}, the node $a$ of the network from
Figure~\ref{fig:partition}  and with the node marked
by $\{t'_1\}$ in the modified version of the network from
Figure~\ref{fig:noNe2} the node $b$.

Suppose that a solution to the considered instance of the PARTITION
problem exists, i.e., for some set $S \subseteq \{1, \ldots, n\}$ we
have $\sum_{i\in S} a_i = \sum_{i\not\in S} a_i = \frac{1}{2}$.
Consider the game $\mathcal{G}(\snet)$ and the joint strategy
$\strprofile$ formed by the following strategies:

\begin{itemize}

\item $t_1$ assigned to each node $i \in S$ in the network from
  Figure~\ref{fig:partition},

\item $t'_1$ assigned to each node $i \in \{1, \ldots, n\} \setminus
  S$ in the network from Figure~\ref{fig:partition},

\item $t_1'$ assigned to the nodes $a$ and $t_1$ to the node $b$,

\item $t_2$ assigned to node 1 and $t_3$ assigned to the nodes 2, 3
  in both versions of the networks from Figure~\ref{fig:noNe2},

\item $t_2$ and $t_3$ assigned respectively to the nodes marked by
  $\{t_2\}$ and $\{t_3\}$.

\end{itemize}

We claim that $\strprofile$ is a Nash equilibrium. Consider first the
player (i.e., node) $a$. The accumulated weight of its neighbours who
chose strategy $t_1'$ is $\frac12$. Therefore, the payoff for $a$ in
the joint strategy $\strprofile$ is $0$. The accumulated weight of its
neighbours who chose strategy $t_1$ is $\frac12$, as well. Therefore
$t_1'$ is indeed a best response for player $a$ as both strategies
yield the same payoff.  For the same reason, $t_1$ is a best response
for player $b$.  The analysis for the other nodes is straightforward.

Conversely, suppose that a strategy profile $\strprofile$ is a Nash
equilibrium in $\mathcal{G}(\snet)$. From Example \ref{ex:noNe2} it
follows that $\strprofile_a=t_1'$ and $\strprofile_b=t_1$. This
implies that $t_1'$ is a best response of node $a$ to
$\strprofile_{-a}$ and therefore $ \sum_{i \in \{1, \ldots, n\} \mid
  s_i = t_1'} w_{i a} \geq \sum_{i \in \{1, \ldots, n\} \mid s_i =
  t_1} w_{i a}$. By a similar reasoning, for node $b$ we have $
\sum_{i \in \{1, \ldots, n\} \mid s_i = t_1} w_{i b} \geq \sum_{i \in
  \{1, \ldots, n\} \mid s_i = t_1'} w_{i b}$. Since $\sum_{i=1}^n a_i
= 1$ and for $i\in\{1,\ldots,n\}$, $w_{i a} = w_{i b} = a_i$, and $s_i
\in \{t_1, t'_1\}$ we have for $S := \{i \in \{1, \ldots, n\} \mid s_i
= t_1\}$, $\sum_{i\in S} a_i = \sum_{i\not\in S} a_i$.  In other
words, there exists a solution to the considered instance of the
partition problem.  \HB

\begin{theorem}
\label{thm:Ne-nosource-npcomplete}
For a network $\snet$ whose underlying graph has no source nodes,
deciding whether the game $\mathcal{G}(\snet)$ has a Nash equilibrium
is NP-complete.
\end{theorem}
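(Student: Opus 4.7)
The plan is to establish NP-membership and NP-hardness separately. NP-membership is immediate: a Nash equilibrium is a polynomial-size witness that is verified in polynomial time by checking the best-response condition at each player.

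For NP-hardness I would adapt the reduction used in the proof of Theorem \ref{thm:Ne-npcomplete} so that no source nodes remain. That construction contains two kinds of source nodes: (a) the singleton-product sources $\{t_2\}$ and $\{t_3\}$ in each of the two copies of the network from Figure \ref{fig:noNe2}; and (b) the multi-product sources $1, \ldots, n$ of Figure \ref{fig:partition}, whose free choice between $t_1$ and $t_1'$ encodes the candidate partition. For sources of kind (a), I would add a single incoming edge to $u$ from some non-source vertex in the same copy (e.g.\ one of the triangle nodes) with some feasible weight. Since $\prodset(u)$ is a singleton, $u$ is trivially in best response regardless of its payoff, and the triangle node's strategic situation is unchanged because only its outgoing edges are affected. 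For each source $i$ of kind (b), I would add an incoming edge from one of the now non-source singletons $\{t_2\}$ to $i$ with some small positive weight. Since $\{t_2\}$ is forced to play $t_2 \notin \{t_1, t_1'\} = \prodset(i)$, this edge contributes $0$ to $\payoff_i$ regardless of $i$'s strategy. Setting $\theta(i, t_1) = \theta(i, t_1')$ to a common positive value then makes $i$ indifferent between $t_1$ and $t_1'$, reproducing the constant-payoff freedom that $i$ previously enjoyed as a source.

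The resulting network is source-free by construction, and I would verify that it admits a Nash equilibrium iff the PARTITION instance does. Any partition $S$ yields the Nash equilibrium from the proof of Theorem \ref{thm:Ne-npcomplete} essentially unchanged, because the only players whose payoffs have been modified are the singleton-product nodes (trivially in best response) and the $i$'s (indifferent by design), while the best-response analyses at $a$, $b$, and the triangle nodes are unaffected. Conversely, any Nash equilibrium in the modified game still forces exactly the weight-balancing conditions at $a$ and $b$ used in the proof of Theorem \ref{thm:Ne-npcomplete}, yielding a partition. The main obstacle is the bookkeeping needed to confirm that the added incoming edges can be given weights satisfying the constraint $\sum_{j \in \neighbour(i)} \weight_{ji} \leq 1$ at every node and do not open any new profitable deviations; both reduce to routine checks because the added edges are few and can be chosen with arbitrarily small weights.
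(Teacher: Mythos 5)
Your proposal is correct and shares the overall plan of the paper's proof: both keep the reduction of Theorem~\ref{thm:Ne-npcomplete} intact and merely patch the construction so that no source nodes remain, while preserving the equivalence between Nash equilibria and partitions. The gadgets differ, though. The paper ``twins'' every source node $u$ with a fresh node $u'$ carrying the same product set, adds the edges $(u,u')$ and $(u',u)$, and chooses weights and thresholds so that a coordinated pair earns positive payoff; for the nodes $1,\dots,n$ this turns the former free choice between $t_1$ and $t_1'$ into a two-player coordination game whose equilibria are exactly the two coordinated profiles, so each pair can still be steered to either product. You add no new vertices at all: singleton-product sources get an arbitrary incoming edge (they are in best response no matter what), and each $i\in\{1,\dots,n\}$ gets an incoming edge from a forced node playing $t_2\notin\prodset(i)$, so that with $\theta(i,t_1)=\theta(i,t_1')$ node $i$ is permanently indifferent and both strategies remain best responses. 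Both devices leave the analyses at $a$, $b$ and the triangle nodes untouched, so the forward and backward directions go through verbatim; your variant is slightly more economical, while the paper's twinning is the same trick it reuses for Theorem~\ref{thm:acyclic2}. One detail worth stating explicitly in your write-up: after your modification the nodes $1,\dots,n$ receive the strictly negative payoff $-\theta(i,\cdot)$ instead of the constant $\constutil>0$ of a source node; this is harmless, since the equilibrium analysis only ever compares a player's payoffs across his own strategies, but it should be said.
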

\begin{proof}
The proof extends the proof of the above theorem. Given an instance of
the PARTITION problem we use the following modification of the
network. We `twin' each node $i \in \{1,\ldots,n\}$ in
Figure~\ref{fig:partition} with a new node $i'$ with the product set
$\{t_1,t_1'\}$, by adding the edges $(i,i')$ and $(i',i)$. We also
`twin' nodes marked $\{t_2\}$ and $\{t_3\}$ in Figure~\ref{fig:noNe2}
with new nodes with the product set $\{t_2\}$ and $\{t_3\}$
respectively. Additionally, we choose the weights on the new edges
$w_{ii'}$, $w_{i'i}$ and the corresponding thresholds so that when $i$
and $i'$ adopt a common product, their payoff is positive. Then the
underlying graph of the resulting network does not have any source
nodes and the above proof remains valid for this new network.
\end{proof}

\section{Weakly acyclic games}
\label{sec:WA}
In this section we study the complexity of checking whether a social
network game is weakly acyclic.  We establish two results that are
analogous to the ones established in \cite{SA13} for the case of
social networks in which the nodes may decide not to choose any
product. The proofs are based on similar arguments though the details
are different.
 
\begin{theorem}\label{thm:acyclic}
For an arbitrary network $\snet$, deciding whether the game
$\mathcal{G}(\snet)$ is weakly acyclic is co-NP hard.
\end{theorem}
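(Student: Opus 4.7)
The plan is to reduce from PARTITION to the complement of weak acyclicity: given an instance $I=(a_1,\ldots,a_n)$ with $\sum_{i=1}^n a_i = 1$, I would build in polynomial time a network $\snet_I$ such that $\mathcal{G}(\snet_I)$ fails to be weakly acyclic exactly when $I$ admits a balanced partition. As backbone I would reuse the construction from the proof of Theorem~\ref{thm:Ne-npcomplete}: the PARTITION gadget of Figure~\ref{fig:partition} attached via its nodes $a$ and $b$ to two copies of the no-equilibrium triangle of Figure~\ref{fig:noNe2}, one unchanged and one with $t_1$ renamed everywhere to $t_1'$.

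The key structural observation I would exploit is that a balanced partition corresponds exactly to a \emph{tie} at the nodes $a$ and $b$: since $w_{ia}=w_{ib}=a_i$, the accumulated weights from nodes $1,\ldots,n$ on the two options of $a$ (respectively $b$) agree iff $\sum_{i\in S} a_i=\sum_{i\notin S} a_i=\tfrac12$, where $S$ is the set of those $i$ choosing $t_1$. At such a tied joint strategy neither $a$ nor $b$ has a \emph{strictly} profitable deviation, so every profitable deviation must take place inside one of the two triangles. But with the source-product of each triangle held at $t_1$ (resp.\ $t_1'$) we are exactly in the game of Example~\ref{ex:noNe2}, which has no Nash equilibrium; hence every improvement path from the tied joint strategy is trapped in the six-state cycle listed there and is infinite. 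This produces the required witness that $\mathcal{G}(\snet_I)$ is not weakly acyclic in the YES case.

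For the converse direction I would have to show that, when $I$ has no solution, from \emph{every} joint strategy some finite improvement path exists. The absence of a balanced partition guarantees that at every joint strategy at least one of $a,b$ has a strictly profitable deviation, and one can sequence these deviations to eventually reach a Nash equilibrium. To make this work, however, I expect to need to enrich each triangle with an escape product $t_4$ in the spirit of Example~\ref{ex:Ne-noWA}, whose threshold is calibrated so that (i) switching to $t_4$ is unprofitable while the triangle's source is tied at the partition state, yet (ii) switching to $t_4$ becomes profitable after $a$ (or $b$) has been forced away from its tied value by the imbalance. Once this escape is available, the triangles can be driven to the all-$t_4$ Nash equilibrium in finitely many steps.

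The main obstacle is precisely the simultaneous realisation of (i) and (ii): the thresholds, the internal weights $w_1,w_2$ of the triangle, the $t_4$-threshold and the partition weights $a_i$ must all be chosen so that, in the tied case, the $t_4$ escape is closed off and the no-equilibrium cycle of Example~\ref{ex:noNe2} survives unchanged, while in the imbalanced case the escape activates and permits convergence to a Nash equilibrium. Verifying that such a choice exists and can be produced in polynomial time is the technical core of the proof; once the gadget is in place, the overall reduction argument follows the same structural pattern as that used in~\cite{SA13} for the analogous result in the optional-product setting, yielding the claimed co-NP hardness of weak acyclicity.
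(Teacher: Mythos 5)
Your YES direction goes through: in the Theorem~\ref{thm:Ne-npcomplete} gadget, a balanced partition lets you start with $a$ at $t_1$ and $b$ at $t_1'$, both tied and hence frozen, so every improvement path stays inside the two equilibrium-free triangles and is infinite. The genuine gap is in the NO direction, and it is fatal for the architecture you chose. In that gadget $a$ and $b$ both have product set $\{t_1,t_1'\}$ and identical incoming weights $w_{ia}=w_{ib}=a_i$, so when no balanced partition exists they both converge to the \emph{same} majority product and then never move again; consequently one of the two triangles is permanently fed exactly the product that sustains its no-equilibrium cycle, every improvement path from such a configuration is infinite, and $\mathcal{G}(\snet_I)$ fails to be weakly acyclic in the NO case as well. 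The reduction therefore does not distinguish the two cases. (Your auxiliary claim that in the NO case ``at every joint strategy at least one of $a,b$ has a strictly profitable deviation'' is also false once both sit at the majority product.) The repair you sketch---escape products added to the triangle nodes---cannot realize your conditions (i) and (ii): a triangle node observes only which element of $\{t_1,t_1'\}$ its source currently plays, not whether the source is tied, so its local environment is identical in the tied and untied cases and no threshold calibration can open the escape in one case while closing it in the other. Example~\ref{ex:Ne-noWA} moreover shows that adding such an escape product to all three cycle nodes still leaves an infinite improvement path, so it would not even rescue the triangle if it could be activated.

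The paper's construction resolves exactly this difficulty by relocating the escape: $a$ gets product set $\{t_1,t_4\}$ and $b$ gets $\{t_2,t_5\}$, where $t_4,t_5$ occur nowhere in the triangle, and the imbalance is quantified by the polynomial-time computable $\tau$ of Property~\ref{partition-tau} so that setting $0<\theta(a,t_4)=\theta(b,t_5)<\tau$ makes the escape strictly worse than the tied payoff $0$ but strictly better than the imbalanced payoff $\le-\tau$. A single triangle $c,e,d$ is wired so that its cycle needs \emph{both} $a$ at $t_1$ (feeding $c$) and $b$ at $t_2$ (feeding $d$); in the NO case the minority-side node escapes, $d$ then never finds $t_2$ a better response, and a scheduled improvement path ($1,\dots,n,g,a,b,c,e,d$) terminates from every initial joint strategy. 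You correctly identified the calibration of the escape as the technical core, but the gadget you propose cannot carry it; the placement of the escape on the interface nodes $a,b$ and the single-triangle wiring are the missing ideas.
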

\begin{proof}
We again use an instance of the PARTITION problem in the form of $n$
positive rational numbers $(a_1,\LL,a_n)$ such that $\sum_{i=1}^n a_i
= 1$.
Consider the network given in Figure~\ref{fig:ufip-hard}. For each
node $i\in\{1,\LL,n\}$ we set $P(i) = \{t_1, t_2\}$. The product set
for the other nodes are marked in the figure. As before, we set
$w_{i a} = w_{i b} = a_i$.

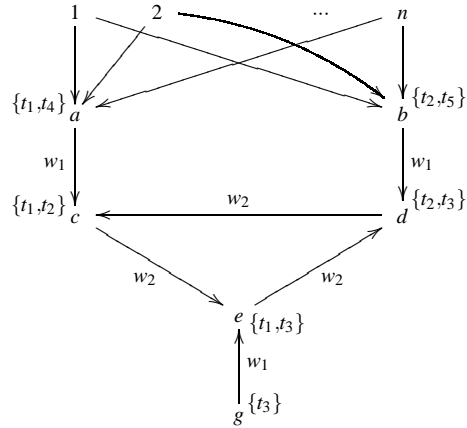
\begin{figure}[ht]
\centering
$
\def\objectstyle{\scriptstyle}
\def\labelstyle{\scriptstyle}
\xymatrix@W=10pt @R=27pt @C=15pt{
1 \ar[d]_{} \ar[rrrrd]_{}& 2 \ar[ld]^{} \ar@/^0.7pc/[rrrd]^{}& & \cdots &n \ar[lllld]^{} \ar[d]^{} \\
a \ar@{}[u]^<{\{t_1,t_4\}} \ar[d]_{w_1} & & & &b \ar@{}[u]_<{\{t_2,t_5\}} \ar[d]^{w_1} \\
c  \ar[rrd]_{w_2} \ar@{}[u]^<{\{t_1,t_2\}}& & & &d  \ar[llll]_{w_2} \ar@{}[u]_<{\{t_2,t_3\}}\\
& &e \ar[rru]_{w_2} \ar@{}[rru]_<{\{t_1,t_3\}}\\
& &g \ar[u]_{w_1} \ar@{}[u]_<{\{t_3\}}\\
}$
\caption{\label{fig:ufip-hard}A network related to weakly acyclic games}
\end{figure}

Since for all $i \in \{1,\ldots,n\}$, $a_i$ is rational, it has the
form $a_i = \frac{l_i}{r_i}$. Let $\tau=\frac{1}{2\cdot r_1 \cdot
  \ldots \cdot r_n}$. 
The following property holds.

\begin{property}
\label{partition-tau}
Given an instance $(a_1,\ldots,a_n)$ of the PARTITION problem and
$\tau$ defined as above, for all $S \subseteq \{1,\ldots,n\}$
\begin{enumerate}[(i)]
\item if $\sum_{i \in S} a_i < \frac12$, then $\sum_{i \in S} a_i \leq \frac12 -\tau$,
\item if $\sum_{i \in S} a_i > \frac12$, then $\sum_{i \in S} a_i \geq \frac12 +\tau$.
\end{enumerate}
\end{property}
\begin{proof}
By definition, each $a_i$ and $\frac12$ is a multiple of $\tau$. Thus
$\sum_{i \in S} a_i = x \cdot \tau$ and $\frac12 = y \cdot \tau$ where
$x$ and $y$ are integers. \\
\noindent {\it (i)} If $x \cdot \tau < y \cdot \tau$, then $x \cdot
\tau \leq (y-1) \cdot \tau$. Therefore $\sum_{i \in S} a_i \leq \frac12 -
\tau$.\\
\noindent The proof of {\it (ii)} is analogous.
\end{proof}
Note that given $(a_1,\ldots,a_n)$, $\tau$ can be defined in
polynomial time. Let the thresholds be defined as follows:
$\theta(a,t_1)=\theta(b,t_2)=\frac12$ and $0 <
\theta(a,t_4)=\theta(b,t_5) < \tau$. The threshold for nodes $c,d$ and
$e$ is a constant $\theta_1$ such that $\theta_1 < w_1 < w_2$. Thus,
like in the network in Figure~\ref{fig:noNe2}, it is more profitable
to a player residing on a triangle to adopt the product adopted by his
neighbour residing on a triangle than by the other neighbour.

Suppose that a solution to the considered instance of the PARTITION
problem exists. That is, for some set $S \sse \{1, \LL, n\}$ we have
$\sum_{i\in S} a_i = \sum_{i\not\in S} a_i = \frac{1}{2}$.  In the
game $\mathcal{G}(\snet)$, take the joint strategy $s$ formed by the
following strategies:

\begin{itemize}

\item $t_1$ assigned to each node $i \in S$ and the nodes $a$ and $c$,

\item $t_2$ assigned to each node $i \in \{1, \LL, n\} \setminus S$
and the nodes $b$ and $d$,

\item $t_3$ assigned to the nodes $e$ and $g$.

\end{itemize}

Any improvement path that starts in this joint strategy will not
change the strategies assigned to the nodes $a, b$ and $g$. So if such
an improvement path terminates, it produces a Nash equilibrium in the
game associated with the network given in Figure~\ref{fig:noNe2} of
Example~\ref{ex:noNe2}. But we argued that this game does not have a
Nash equilibrium. Consequently, there is no finite improvement path in
the game $\mathcal{G}(\snet)$ that starts in the above joint strategy
and therefore $\mathcal{G}(\snet)$ is not weakly acyclic.

Now suppose that the considered instance of the PARTITION problem does
not have a solution. Then we show that the game $\mathcal{G}(\snet)$
is weakly acyclic.  To this end, we order the nodes of $\snet$ as
follows (note the positions of the nodes $c, d$ and $e$): $1, 2, \LL,
n, g, a, b, c, e, d$. Given a joint strategy, consider an improvement
path in which at each step the first node in the above list that did
not select a best response switches to a best response. After at most
$n$ steps the nodes $1, 2, \LL, n$ all selected a product $t_1$ or
$t_2$. Let $\strprofile$ be the resulting joint strategy. 

First suppose that $\sum_{i \in \{1, \LL, n\} \mid s_i = t_1} w_{i a}
> \frac12$. This implies that $\sum_{i \in \{1, \LL, n\} \mid s_i =
  t_2} w_{i b} < \frac12$. By Property~\ref{partition-tau}, $\sum_{i
  \in \{1, \LL, n\} \mid s_i = t_2} w_{i b} \leq \frac12 - \tau$.
The payoff of the node $b$ depends only on the choices made by the source
nodes $1, 2, \ldots, n$, so we have $\payoff_b(t_2,\strprofile_{-b}) \leq
-\tau$. Since $\theta(b,t_5) < \tau$, we also have
$\payoff_b(t_5,\strprofile_{-b}) > -\tau$ and therefore $t_5$ is a
best response for node $b$. Let $\strprofile^b$ be the resulting
strategy in which node $b$ selects $t_5$. Consider the prefix of $\xi$
starting at $\strprofile^b$ (call it $\xi^b$). We argue that in $\xi^b$,
$t_2$ is never a better response for node $d$. Suppose that
$\strprofile^b_d=t_3$. We have the following two cases:
\begin{itemize}
\item $\strprofile^b_e=t_3$: then $\payoff_d(\strprofile^b) = w_2 -
  \theta_1$ and so $t_3$ is the best response for node $d$.
\item $\strprofile^b_e=t_1$: then $\payoff_d(\strprofile^b) =
  -\theta_1$ and if node $d$ switches to $t_2$ then $\payoff_d(t_2,
  \strprofile^b_{-b}) =-\theta_1$ (since
  $\strprofile^b_b=t_5$). Thus $t_2$ is not a better response.
\end{itemize}
Using the above observation, we conclude that there exists a suffix
of $\xi^b$ (call it $\xi^d$) such that node $d$ never chooses
$t_2$. This means that in $\xi^d$ the unique best response for node
$c$ is $t_1$ and for node $e$ is $t_1$. This shows that $\xi^d$ is
finite and hence $\xi$ is finite, as well.

The case when $\sum_{i \in \{1, \LL, n\} \mid s_i = t_2} w_{i b} >
\frac12$ is analogous with all improvement paths terminating in a
joint strategy where node $a$ chooses $t_4$ and node $c$
chooses $t_2$.
\end{proof}

\begin{theorem}\label{thm:acyclic2}
  For a network $\snet$ whose underlying graph has no source nodes,
  deciding whether the game $\mathcal{G}(\snet)$ is weakly acyclic is
  co-NP hard.
\end{theorem}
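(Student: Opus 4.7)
The plan is to adapt the construction from the proof of Theorem \ref{thm:acyclic} using the twinning technique introduced in Theorem \ref{thm:Ne-nosource-npcomplete}. Starting from the network of Figure \ref{fig:ufip-hard}, I would replace each source node $v \in \{1, \LL, n\} \cup \{g\}$ by a pair $(v, v')$ in which $v'$ has the same product set as $v$, and add the two new edges $(v, v')$ and $(v', v)$. The weights and thresholds on the two new nodes are chosen so that adopting a common product yields a strictly positive payoff to both $v$ and $v'$, while any unilateral deviation from a matched configuration is strictly unprofitable; the resulting network $\snet$ then has no source nodes. As a preliminary lemma I would establish that matched twin pairs behave exactly like source nodes in any improvement path: because the payoff of each new node depends only on its unique incoming neighbour (its sibling), once a pair is matched neither node ever has a profitable deviation, regardless of what the other players do.

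Given this, the reduction mirrors the two directions of the proof of Theorem \ref{thm:acyclic}. In the forward direction, given a solution $S$ to the PARTITION instance, I extend the bad joint strategy of the original proof by setting $s_{i'} := s_i$ for $i \in \{1, \LL, n\}$ and $s_{g'} := t_3$, so that every twin pair is matched. By the preliminary lemma, any improvement path starting at this joint strategy keeps the strategies of $a, b$ and $g$ fixed, and the earlier argument via the Figure \ref{fig:noNe2} subgame (which has no Nash equilibrium by Example \ref{ex:noNe2}) then shows that no finite improvement path exists, so $\mathcal{G}(\snet)$ is not weakly acyclic.

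For the converse direction, assuming the PARTITION instance has no solution, I would build an improvement path from an arbitrary initial joint strategy in two phases. First, a prefix of at most $n+1$ deviations would match every initially mismatched twin pair, each such step being profitable by the weight choice. Second, once all twins are matched, I would apply the improvement path construction from the proof of Theorem \ref{thm:acyclic} to the non-twin nodes, using the same order $1, 2, \LL, n, g, a, b, c, e, d$. The main technical obstacle is verifying that matched twins remain matched throughout this second phase, but this follows from the preliminary lemma: a twin's payoff depends solely on its sibling and is therefore unaffected by deviations of the non-twin nodes, so the verification only has to be traced carefully through each step of the chosen path.
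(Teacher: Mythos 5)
Your proposal is correct and follows essentially the same route as the paper: twin each source node (the nodes $1,\ldots,n$ and $g$) with a sibling carrying the same product set, choose the new weights and thresholds so that a matched pair earns a positive payoff and hence never deviates again, extend the bad joint strategy in the forward direction, and in the converse direction observe that after at most one move per twin pair the run reduces to the improvement path of Theorem~\ref{thm:acyclic} on $a,b,c,e,d$. The only cosmetic difference is that you phrase the schedule as two explicit phases with a separate stability lemma for matched twins, whereas the paper folds both into a single priority ordering $1,1',\ldots,n,n',g,g',a,b,c,e,d$ and notes that each twinned node is scheduled at most once.
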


\begin{proof}
The proof extends the proof of the above theorem. Given
an instance of the PARTITION problem we use the following modification
of the network given in Figure~\ref{fig:ufip-hard}.  We `twin' each
node $i \in \{1, \LL, n\}$ with a new node $i'$, also with the product
set $\{t_1, t_2\}$, by adding the edges $(i,i')$ and $(i',i)$. We also
`twin' the node $g$ with a new node $g'$, also with the product set
$\{t_3\}$, by adding the edges $(g,g')$ and $(g',g)$.   Additionally, we choose
the weights $w_{i i'}$ and $w_{i' i}$ and the corresponding thresholds so that when
$i$ and $i'$ adopt a common product, their payoff is positive.

Suppose that a solution to the considered instance of the PARTITION
problem exists. Then we extend the joint strategy considered in the
proof of Theorem~\ref{thm:acyclic} by additionally assigning $t_1$ to
each node $i'$ such that $i \in S$, $t_2$ to each node $i'$ such that
$i \in \{1, \LL, n\} \setminus S$ and $t_3$ to the node $g'$.  Then,
as before, there is no finite improvement path starting in this joint
strategy, so $\mathcal{G}(\snet)$ is not weakly acyclic.

Suppose now that no solution to the considered instance of the
PARTITION problem exists. Take the
following order of the nodes of $\snet$:
\[
1, 1', 2, 2', \LL, n, n', g, g', a, b, c, e, d,
\]
and as in the previous proof, given a joint strategy, we consider an
improvement path $\xi$ in which at each step the first node in the
above list that did not select a best response switches to a best
response.

Note that each node from the list $1, 1', 2, 2', \LL, n, n', g, g'$ is
scheduled at most once.  So there exists a suffix of $\xi$ in which
only the nodes $a, b, c, e, d$ are scheduled.  Using now the argument
given in the proof of Theorem~\ref{thm:acyclic} we conclude that there
exists a suffix of $\xi$ that is finite. This proves that
$\mathcal{G}(\snet)$ is weakly acyclic.
\end{proof}

\section{Paradoxes}

In \cite{AMS13} we identified various paradoxes in social networks
with multiple products and studied them using the social network games
introduced in \cite{SA12}. Here we carry out an analogous analysis for
the case when the product selection is obligatory. This qualification,
just like in the case of social network games, substantially changes
the analysis. We focus on the main four paradoxes that we successively
introduce and analyze.

\subsection{Vulnerable networks}

The first one is the following.  We say that a social network $\snet$
is \bfe{vulnerable} if for some Nash equilibrium $s$ in
$\mathcal{G}(\snet)$, an expansion $\snet'$ of $\snet$ exists such
that each improvement path in $\mathcal{G}(\snet')$ leads from $s$ to
a joint strategy $s'$ which is a Nash equilibrium both in
$\mathcal{G}(\snet')$ and $\mathcal{G}(\snet)$ such that $s
>s'$. So the newly added product triggers a sequence of changes that
unavoidably move the players from one Nash equilibrium to another one
that is strictly worse for everybody.

The following example shows that vulnerable networks exist. Here
and elsewhere the relevant expansion is depicted by means of a product
and the dotted arrow pointing to the relevant node.

\begin{example} 
\label{ex:fas}
\rm

Consider the directed graph given in Figure~\ref{fig:fas}, in which the product set of each node
is marked next to it. 
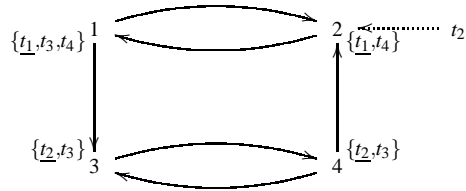
\begin{figure}[ht]
\centering
$
\def\objectstyle{\scriptstyle}
\def\labelstyle{\scriptstyle}
\xymatrix@W=10pt @R=40pt @C=30pt{
1 \ar@{}[d]_<{\{\underline{t_1},t_3,t_4\}} \ar[d]_{} \ar@/^0.7pc/[rr]^{}& &2 \ar@/^0.7pc/[ll]^{} \ar@{}[d]^<{\{\underline{t_1},t_4\}}& t_2 \ar@{..>}[l] \\
3 \ar@{}[u]^<{\{\underline{t_2},t_3\}} \ar@/^0.7pc/[rr]^{}& &4 \ar@/^0.7pc/[ll]^{} \ar@{}[u]_<{\{\underline{t_2},t_3\}} \ar[u]_{} \\
}$ 
\caption{\label{fig:fas}A directed graph}
\end{figure}

We complete it to the desired social network below.
Let `$\_$' stand for an arbitrary strategy of the relevant player.
We stipulate that
\II

$p_2(\_, t_2, \_, t_2) > p_2(t_1, t_1, \_, \_)$,

$p_1(t_3, t_2, \_, \_) > p_1(t_1, t_2, \_, \_) > p_1(t_4, t_2, \_, \_)$,

$p_3(t_3, \_, t_3, \_) > p_3(\_, \_,t_2, t_2)$,

$p_4(\_, \_, t_3, t_3) > p_4(\_, \_,t_3, t_2)$,

$p_2(\_, t_4, \_, \_) > p_2(\_, t_2, \_, t_3)$,

$p_1(t_4, t_4, \_, \_) > p_1(t_3, \_, \_, \_) > p_1(t_1, t_4, \_, \_)$,
\II

\NI
so that
$2: t_2, 1: t_3, 3: t_3, 4: t_3, 2: t_4, 1: t_4$
is a unique improvement path that starts in $(t_1, t_1, t_2, t_2)$
and ends in $(t_4, t_4, t_3, t_3)$.

Additionally we stipulate that 
\II

$p_1(t_1, t_1, \_, \_) > p_1(t_4, t_4, \_, \_)$,

$p_2(t_1, t_1, \_, \_) > p_2(t_4, t_4, \_, \_)$,

$p_3(\_, \_, t_2, t_2) > p_3(\_, \_, t_3, t_3)$,

$p_4(\_, \_, t_2, t_2) > p_4(\_, \_, t_3, t_3)$,
\II

\NI
so that
$(t_1, t_1, t_2, t_2) >_s (t_4, t_4, t_3, t_3)$.

These requirements entail constraints on the weights and thresholds that are for instance realized
by

$
w_{1 2} = 0, \ w_{2 1} = 0.2, \ w_{4 2} = 0.3, \ w_{1 3} = 0.2, \ w_{3 4} = 0.2, \ w_{4 3} = 0,
$

\NI
and

$
\theta(1,t_1) = 0.2, \ \theta(1,t_3) = 0.1, \ \theta(1,t_4) = 0.3, \ \theta(2,t_1) = 0.1, \ \theta(2,t_2) = 0.3,
$ 

$
\theta(2,t_4) = 0.2, \ \theta(3,t_2) = 0.1, \ \theta(3,t_3) = 0.2, \ \theta(4,t_2) = 0.1, \ \theta(4,t_3) = 0.2.
$
\HB
\end{example}

It is useful to note that in the setup of \cite{AMS13}, in which for each player the `abstain' strategy is allowed,
it remains an open problem whether vulnerable networks (called there because of various other alternatives $\fa s$-vulnerable
networks) exist.

\subsection{Fragile networks}

Next, we consider the following notion.  We say that a social network
$\snet$ is \bfe{fragile} if $\mathcal{G}(\snet)$ has a Nash
equilibrium while for some expansion $\snet'$ of $\snet$,
$\mathcal{G}(\snet')$ does not. The following example shows that
fragile networks exist.

\begin{example}
\label{ex:fragile}
\normalfont
Consider the network $\snet$ given in Figure~\ref{fig:fragile}, where
the product set of each node is marked next to it. 

\begin{figure}[ht]
\centering
$
\def\objectstyle{\scriptstyle}
\def\labelstyle{\scriptstyle}
\xymatrix@R=30pt @C=30pt{
&1 \ar[rd]^{w} \ar@{}[ld]_<{\{t_2\}} & t_1 \ar@{-->}[l]^{}\\
3 \ar[ur]^{w} \ar@{}[ur]^<{\{t_3,t_1\}}& &2 \ar[ll]^{w} \ar@{}[lu]_<{\{t_2,t_3\}} \\
}$
\caption{\label{fig:fragile}A fragile network}
\end{figure}
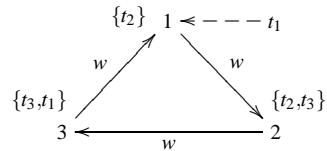

Let the thresholds
be defined as follows: $\theta(2,t_2)=\theta(3,t_3)=r_1$
and $\theta(1,t_2)=\theta(2,t_3)=\theta(3,t_1)=r_2$ where
$r_1>r_2$. We also assume that $w >r_1-r_2$.

Consider the joint strategy $\strprofile$, in which nodes 1, 2 and 3
choose $t_2$, $t_2$ and $t_1$ respectively. It can be verified that
$\strprofile$ is a Nash equilibrium in $\mathcal{G}(\snet)$. Now
consider the expansion $\snet'$ of $\snet$ in which product $t_1$ is
added to the product set of node 1 and let $\theta(1,t_1)=r_1$. Then
$\snet'$ is the network in Example~\ref{ex:noNe1} which, as we saw,
does not have a Nash equilibrium.  \HB
\end{example}

\subsection{Inefficient networks}

We say that a social network $\snet$ is \bfe{inefficient} if for some
Nash equilibrium $s$ in $\mathcal{G}(\snet)$, a contraction $\snet'$
of $\snet$ exists such that each improvement path in
$\mathcal{G}(\snet')$ starting in $\strprofile$ leads to a joint
strategy $\strprofile'$ which is a Nash equilibrium both in
$\mathcal{G}(\snet')$ and $\mathcal{G}(\snet)$ such that $s' > s$.  We
note here that if the contraction was created by removing a product
from the product set of node $i$, we impose that any improvement path
in $\mathcal{G}(\snet')$, given a starting joint strategy from
$\mathcal{G}(\snet)$, begins by having node $i$ making a choice (we
allow any choice from his remaining set of products as an improvement
move). Otherwise the initial payoff of node $i$ in
$\mathcal{G}(\snet')$ is not well-defined.

\begin{example}
\label{ex:inefficient}
\rm

We exhibit in Figure \ref{fig:inefficient} an example of an
inefficient network. The weight of each edge is assumed to be $w$, and
we also have the same product-independent threshold, $\theta$, for all
nodes, with $w> \theta$.

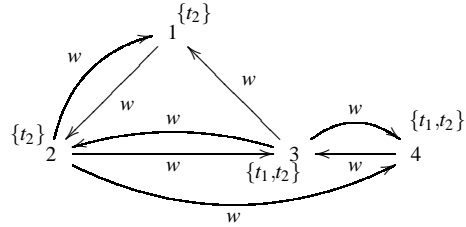
\begin{figure}[ht]
\centering
$
\def\objectstyle{\scriptstyle}
\def\labelstyle{\scriptstyle}
\xymatrix@W=10pt @R=35pt @C=30pt{
& 1 \ar[dl]^{w} \ar@{}[rr]^<{\{t_2\}}& & \\
2 \ar[rr]_{w} \ar@/^1pc/[ru]^{w} \ar@/_1.6pc/[rrr]_{w} \ar@{}@/^1pc/[ru]^<{\{t_2\}}& & 3 \ar[lu]_{w} \ar@{}[ll]^<{\{t_1,t_2\}} \ar@/^1pc/[r]^{w} \ar@/_0.7pc/[ll]_{w} & 4 \ar[l]^{w} \ar@{}[lu]_<{\{t_1,t_2\}}\\
}$
\caption{\label{fig:inefficient}An example of an inefficient network}
\end{figure}

Consider as the initial Nash equilibrium the joint strategy $s = (t_2,
t_2, t_1, t_1)$. It is easy to check that this is indeed a Nash
equilibrium, with the payoff equal to $w - \theta$ for all nodes.
Suppose now that we remove product $t_1$ from the product set of node
$3$. We claim that the unique improvement path then leads to the Nash
equilibrium in which all nodes adopt $t_2$.

To see this, note that node $3$ moves first in any improvement path
and it has a unique choice, $t_2$. Then node $4$ moves and necessarily
switches to $t_2$. This yields a Nash equilibrium in which each node
selected $t_2$ with the payoff of $2w - \theta$, which
is strictly better than the payoff in $s$.  \HB
\end{example}

\subsection{Unsafe networks}

Finally, we analyze the following notion.  We call a social network
$\snet$ \bfe{unsafe} if $\mathcal{G}(\snet)$ has a Nash equilibrium,
while for some contraction $\snet'$ of $\snet$, $\mathcal{G}(\snet')$
does not. The following example shows that unsafe networks exist.

\begin{example}
\label{ex:unsafe}
\normalfont Let $\snet_1$ be the modification of the network $\snet$
given in Figure~\ref{fig:noNe-cycle} where node 1 has the product set
$\{t_1,t_2,t_4\}$, where $\theta(1,t_4) <r_2$. Then the joint strategy
$(t_4,t_3,t_3)$ is a Nash equilibrium in $\mathcal{G}(\snet_1)$. Now
consider the contraction $\snet_1'$ of $\snet_1$ where product
$t_4$ is removed from node 1. Then $\snet_1'$ is the network $\snet$,
which as we saw in Example~\ref{ex:noNe1} has no Nash equilibrium.
\HB
\end{example}

\section{Conclusions}

In this paper we studied dynamic aspects of social networks with
multiple products using the basic concepts of game theory.  We used
the model of social networks, originally introduced in \cite{AM11}
that we subsequently studied using game theory in \cite{SA12}, \cite{SA13}
and \cite{AMS13}.

However, in contrast to these three references the product adoption in
this paper is obligatory. This led to some differences. For example,
in contrast to the case of \cite{SA12}, a Nash equilibrium does not
need to exist when the underlying graph is a
simple cycle.  Further, in contrast to the setup of \cite{AMS13}, we
were able to construct a social network that exhibits the strongest
form of the paradox of choice.  On the other hand, some complexity
results, namely the ones concerning weakly acyclic games, remain the
same as in \cite{SA12}, though the proofs had to be appropriately
modified.

\bibliographystyle{eptcs}
\bibliography{e.bib}

\end{document}